\documentclass[preprint,10pt]{elsarticle}

\usepackage[a4paper,top=4.0cm,bottom=4.0cm]{geometry}

\usepackage{amsmath,amssymb,amsthm}

\usepackage{graphicx}
\usepackage{subcaption}
\usepackage{tikz}
\usetikzlibrary{shapes,arrows.meta,calc,positioning,decorations.markings}

\usepackage{hyperref}
\hypersetup{colorlinks=true, urlcolor=red, citecolor=blue, linkcolor=red}

\makeatletter

\newcommand*\MyORCID{0009-0004-3236-9811}
\newcommand*\MyORCIDLink{%
  \href{https://orcid.org/\MyORCID}{\MyORCID}%
}

\gdef\emailauthor#1#2{%
  \stepcounter{ead}%
  \g@addto@macro\@elseads{%
    \raggedright
    \let\corref\@gobble
    \def\@@tmp{#1}%
    \eadsep{\ttfamily\expandafter\strip@prefix\meaning\@@tmp}%
    \space(#2, ORCID:\space\MyORCIDLink)%
    \def\eadsep{\unskip,\space}%
  }%
}

\makeatother

\newtheorem{theorem}{Theorem}
\newtheorem{lemma}[theorem]{Lemma}

\begin{document}

\begin{frontmatter}

\title{A Structural Equivalence of Symmetric TSP to a Constrained Group Steiner Tree Problem}

\author{Y\i lmaz Arslano\u{g}lu}
\ead{yilmaz@hoketo.com}

\affiliation{organization={Independent Researcher}, country={Hamburg, Germany}}

\begin{abstract}
We present a brief structural equivalence between the symmetric TSP and a constrained Group Steiner Tree Problem (cGSTP) defined on a simplicial incidence graph. Given the complete weighted graph on the city set V, we form the bipartite incidence graph between triangles and edges. Selecting an admissible, disk-like set of triangles induces a unique boundary cycle. With global connectivity and local regularity constraints, maximizing net weight in the cGSTP is exactly equivalent to minimizing the TSP tour length.
\end{abstract}

\begin{keyword}
Traveling Salesman Problem \sep
Steiner Trees \sep
Computational Topology \sep
Simplicial Complex \sep
Euler Characteristic
\end{keyword}

\end{frontmatter}

\section{Introduction and Geometric View}

Let $V$ be a set of $n\ge 3$ cities and $G=(V,E)$ be the complete undirected graph with symmetric edge lengths $L_e>0$. The \emph{symmetric Traveling Salesman Problem (TSP)} asks for a minimum-length Hamiltonian cycle on $V$ \cite{applegate2006}.

In this note, we describe an exact structural equivalence of symmetric TSP to a constrained Group Steiner Tree Problem (GSTP) \cite{reich1990}.
The construction utilizes elementary combinatorial topology \cite{edelsbrunner2010} to view a tour not as a one-dimensional cycle, but as the \emph{boundary} of a two-dimensional simplicial surface.
By selecting a set of triangles forming a topological disk, edges shared by two selected triangles (internal edges) cancel out, leaving a unique boundary cycle.
Figure~\ref{fig:toy_instance} shows the same surface--boundary viewpoint on a small symmetric instance:
a connected selection of triangles behaves as an abstract disk, and its induced boundary is a single cycle (the tour).

\begin{figure}[t!]
\centering
\includegraphics[width=1.0\linewidth]{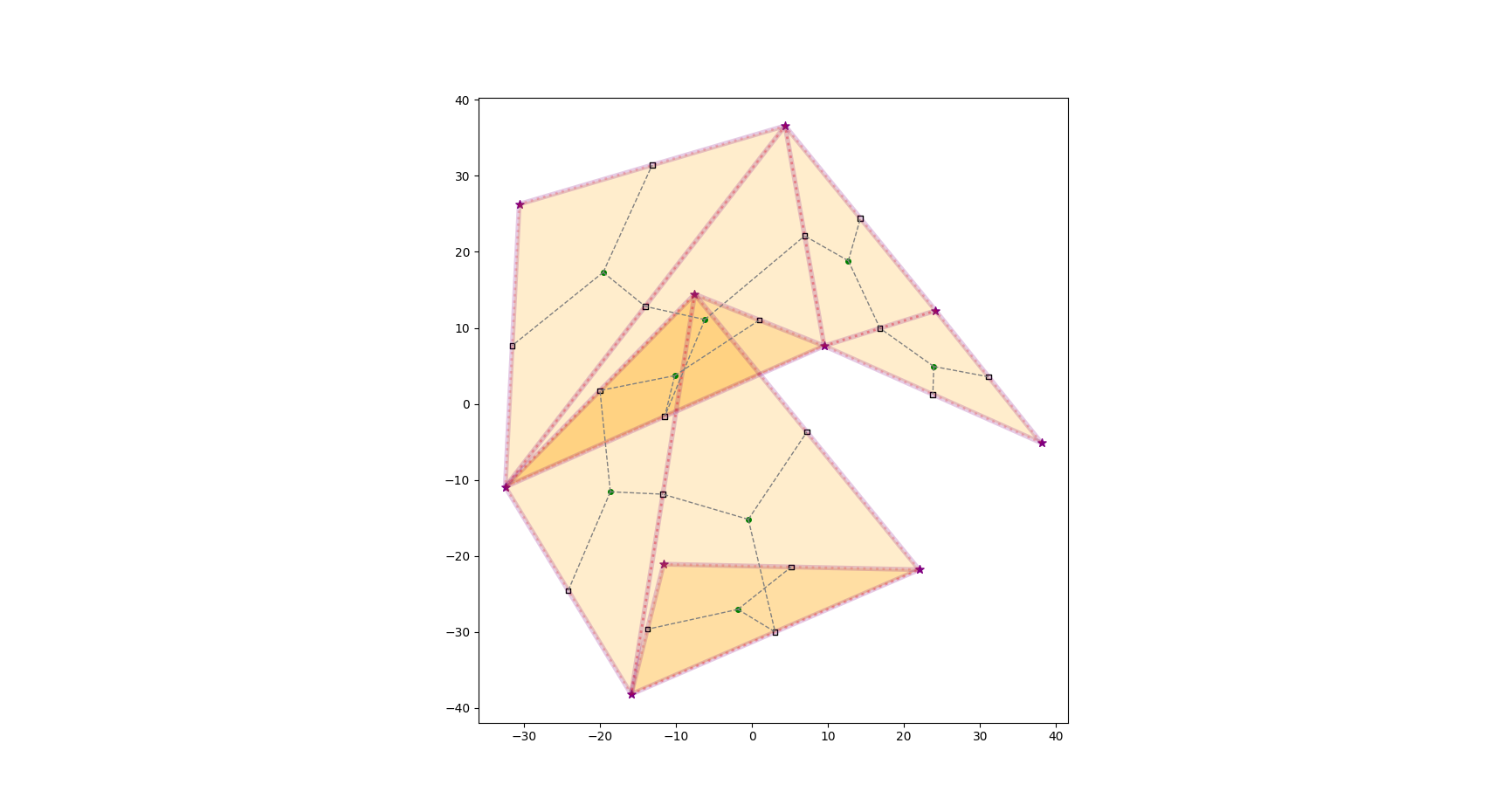}
\caption{A small symmetric instance ($n=10$) shown with an arbitrary planar embedding for visualization.
A connected set of selected triangles (shaded) forms an abstract disk; internal edges cancel, and the induced boundary
is a single simple cycle (the tour).}
\label{fig:toy_instance}
\end{figure}

\section{The cGSTP Equivalence}

We define the cGSTP on the bipartite \emph{incidence graph} $B=(U\cup W, A)$ (see Fig.~\ref{fig:incidence_graph}).
$U=\binom{V}{3}$ is the set of triangle nodes (circles), $W=E$ is the set of primal edge nodes (squares), and $(t,e)\in A$ denotes incidence.
For each city $v\in V$, we define a group $U(v)=\{t\in U: v\in t\}$.

Triangle nodes act as \emph{group terminals} (cost 0), while edge nodes and incidences carry weights defined to enforce boundary cancellation:
\begin{enumerate}
    \item Each edge node $e \in W$ has cost $c(e) = 2L_e$.
    \item Each incidence arc $(t,e) \in A$ has profit $p(t,e) = L_e$.
\end{enumerate}

Let $x_t, y_e, z_{t,e} \in \{0,1\}$ indicate the selection of triangles, edges, and incidences respectively.
We select an ``admissible'' subgraph $B'$ to maximize the net weight:
\[
W(B') := \sum_{(t,e)\in A} z_{t,e} L_e \;-\; \sum_{e\in W} y_e 2L_e.
\]
Admissibility is defined by the following constraints:

\paragraph{(C1) Incidence linking (Terminal Node Degrees)} Every active triangle must use all three of its edges:
\[
z_{t,e} \le y_e
\]
\[
z_{t,e} \le x_t
\]
\[
\sum_{e\subset t} z_{t,e} = 3x_t
\]

\paragraph{(C2) Manifold regularity (Steiner Node Degrees)} Every primal edge is incident to at most two selected triangles (Fig.~\ref{fig:constraints_viz}A):
\[
y_e \le \sum_{t\supset e} z_{t,e} \le 2y_e
\]

\paragraph{(C3) Global Euler counts (Node Cardinalities)} These match the cardinality of an abstract triangulated disk with $n$ vertices:
\[
\sum x_t = n-2
\]
\[
\sum y_e = 2n-3
\]

\begin{figure}[t!]
\centering
\begin{tikzpicture}[scale=1.25, thick]
    \tikzset{
        city/.style={circle,fill=black,inner sep=1.5pt,outer sep=1pt},
        tri_node/.style={circle,draw=orange!80!black,fill=orange!30,inner sep=2pt},
        edge_node/.style={rectangle,draw=black,fill=white,inner sep=2.5pt},
        dual_edge/.style={draw=blue!80!black,dashed,line width=0.55pt},
        selected_dual/.style={draw=green!60!black,line width=1.4pt,opacity=0.55},
        tri_fill/.style={fill=orange!25,draw=none,opacity=0.28},
        boundary_edge/.style={draw=red!85!black,ultra thick}
    }
    \def\missingTri{1}
    \coordinate (C) at (0,0);
    \foreach \i in {1,...,6} \coordinate (V\i) at ({(\i-1)*60 + 90}:1.75);

    \foreach \i [evaluate=\i as \next using {int(mod(\i,6)+1)}] in {1,...,6} {
        \ifnum\i=\missingTri \else \path[tri_fill] (C) -- (V\i) -- (V\next) -- cycle; \fi
    }

    \draw[gray!30, line width=0.45pt] (C) -- (V1) (C) -- (V2) (C) -- (V3) (C) -- (V4) (C) -- (V5) (C) -- (V6);
    \draw[gray!30, line width=0.45pt] (V1) -- (V2) -- (V3) -- (V4) -- (V5) -- (V6) -- cycle;

    \foreach \i [evaluate=\i as \next using {int(mod(\i,6)+1)}] in {1,...,6} {
        \ifnum\i=\missingTri \else \draw[boundary_edge] (V\i) -- (V\next); \fi
    }
    \pgfmathtruncatemacro{\nextMissing}{int(mod(\missingTri,6)+1)}
    \draw[boundary_edge] (C) -- (V\missingTri); \draw[boundary_edge] (C) -- (V\nextMissing);

    \node[city, label={[font=\small, yshift=-2pt]below:$v$}] (c_center) at (C) {};
    \foreach \i in {1,...,6} \node[city] at (V\i) {};
    \foreach \i [evaluate=\i as \next using {int(mod(\i,6)+1)}] in {1,...,6}
        \node[tri_node] (t\i) at (barycentric cs:C=1,V\i=1,V\next=1) {};
    \foreach \i in {1,...,6}
        \node[edge_node] (e_in\i) at (barycentric cs:C=1,V\i=1) {};
    \foreach \i [evaluate=\i as \next using {int(mod(\i,6)+1)}] in {1,...,6}
        \node[edge_node] (e_out\i) at (barycentric cs:V\i=1,V\next=1) {};

    \foreach \i [evaluate=\i as \next using {int(mod(\i,6)+1)}] in {1,...,6} {
        \draw[dual_edge] (t\i) -- (e_in\i); \draw[dual_edge] (t\i) -- (e_in\next); \draw[dual_edge] (t\i) -- (e_out\i);
    }
    \foreach \i [evaluate=\i as \next using {int(mod(\i,6)+1)}] in {1,...,6} {
        \ifnum\i=\missingTri \else
            \draw[selected_dual] (t\i) -- (e_out\i);
            \draw[selected_dual] (t\i) -- (e_in\i);
            \draw[selected_dual] (t\i) -- (e_in\next);
        \fi
    }

    \begin{scope}[shift={(2.85, 0)}]
        \node[tri_node] at (0, 1.15) {};
        \node[right, font=\scriptsize, align=left] at (0.28, 1.15) {\textbf{Triangle node} ($t \in U$)\\Cost $0$};
        \node[edge_node] at (0, 0.55) {};
        \node[right, font=\scriptsize, align=left] at (0.28, 0.55) {\textbf{Edge node} ($e \in W$)\\Cost $2L_e$};
        \draw[dual_edge] (0, 0.02) -- (0.45, 0.02);
        \node[right, font=\scriptsize, align=left] at (0.55, 0.02) {Incidence\\Profit $L_e$};
        \draw[selected_dual] (0, -0.40) -- (0.45, -0.40);
        \node[right, font=\scriptsize, align=left] at (0.55, -0.40) {Selected in tree};
        \draw[boundary_edge] (0, -0.82) -- (0.45, -0.82);
        \node[right, font=\scriptsize, align=left] at (0.55, -0.82) {Induced boundary};
    \end{scope}
\end{tikzpicture}
\caption{The incidence graph structure. Selecting a set of triangles that form a topological disk (shaded) results in cost cancellation on internal edges, leaving a net cost corresponding to the induced boundary tour (red).}
\label{fig:incidence_graph}
\end{figure}
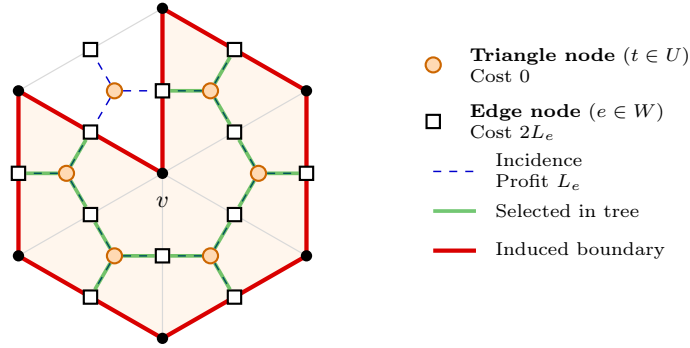

\paragraph{(C4) Global connectivity (Tree)} The active subgraph $B'$ must be a tree.

\paragraph{(C5) Local connectivity (Steiner Groups)}
For each city $v$, the active local incidence subgraph $H_v$ must satisfy the \textit{Euler Characteristic}:
\[
\chi(H_v)=|V(H_v)|-|E(H_v)|=1
\]
Here, $H_v$ is the subgraph of $B'$ induced by selected triangles and edges incident to vertex $v$ (cf. the \textit{star} of $v$ in Fig.~\ref{fig:incidence_graph}).
Given the global tree constraint, this enforces the vertex link to be a simple path
(excluding bowties like Fig.~\ref{fig:constraints_viz}B and cyclic vertex links, which would arise if the local fan in Fig.~\ref{fig:incidence_graph} were closed).
The presence of at least one terminal per group is thus also implicitly enforced.

\begin{figure}[t!]
\centering
\begin{subfigure}[b]{0.45\textwidth}
    \centering
    \begin{tikzpicture}[scale=0.9, line join=round]
        \node[anchor=south, font=\bfseries\footnotesize] at (0, 3.05) {A. Non-manifold edge};
        \coordinate (B) at (0, 0); \coordinate (T) at (0, 3);
        \filldraw[fill=orange!20, draw=orange!80!black, thin, opacity=0.8] (B)--(T)--(2, 1.5)--cycle;
        \filldraw[fill=orange!30, draw=orange!80!black, thin, opacity=0.8] (B)--(T)--(-1.5, 2)--cycle;
        \filldraw[fill=orange!50, draw=orange!80!black, thin, opacity=0.8] (B)--(T)--(-1.2, 0.5)--cycle;
        \draw[ultra thick, black] (B) -- (T);
        \node[rectangle, draw=black, fill=white, inner sep=2pt] at (0, 1.5) {};
    \end{tikzpicture}
\end{subfigure}
\hfill
\begin{subfigure}[b]{0.45\textwidth}
    \centering
    \begin{tikzpicture}[scale=0.7]
        \node[anchor=south, font=\bfseries\footnotesize] at (0, 3.05) {B. ``Bowtie'' singularity};

        \coordinate (C)  at (0,1.5);
        \coordinate (L1) at (-1.5,2.5);
        \coordinate (L2) at (-1.5,0.5);
        \coordinate (R1) at ( 1.5,2.5);
        \coordinate (R2) at ( 1.5,0.5);

        \fill[orange!20] (C) -- (L1) -- (L2) -- cycle;
        \fill[orange!20] (C) -- (R1) -- (R2) -- cycle;

        \fill[red] (C) circle (3pt);
        \node[right, red, font=\bfseries] at (0.1, 1.5) {$v$};

        \draw[magenta, thick] (C) -- (L1) -- (L2) -- cycle;
        \draw[magenta, thick] (C) -- (R2) -- (R1) -- cycle;
    \end{tikzpicture}
\end{subfigure}

\caption{Forbidden anomalies: (A) Edge incident to $>2$ triangles; (B) Disconnected vertex link.}
\label{fig:constraints_viz}
\end{figure}
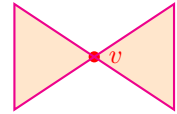

\section{Theoretical Properties}

Let $K=\{t\in U : x_t=1\}$ be the selected triangles and define the boundary edge set
\[
\partial K=\{e\in E : \sum_{t\supset e} z_{t,e}=1\}.
\]

\begin{lemma}[Combinatorial boundary identity]
For any selection satisfying (C1)--(C5),
\[
-\,W(B') = \sum_{e\in \partial K} L_e.
\]
\end{lemma}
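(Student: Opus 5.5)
The identity is a per-edge bookkeeping statement, so I would prove it by rewriting $W(B')$ as a single sum over edge nodes and evaluating each summand separately. The first step is to regroup the incidence profit by edges:
\[
\sum_{(t,e)\in A} z_{t,e}L_e=\sum_{e\in W} L_e\, d_e,
\qquad d_e:=\sum_{t\supset e} z_{t,e}.
\]
This gives
\[
W(B')=\sum_{e\in W} L_e\,(d_e-2y_e).
\]
The whole proof then reduces to determining the possible values of the pair $(y_e,d_e)$ for each edge.

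The second step uses only constraint (C2). If $y_e=0$, then $0\le d_e\le 0$, so $d_e=0$ and the edge contributes nothing. If $y_e=1$, then $1\le d_e\le 2$, which leaves two cases:
\begin{itemize}
\item $d_e=2$: the edge is internal, and its contribution $L_e(2-2)=0$ cancels.
\item $d_e=1$: the edge lies in $\partial K$ by definition, and it contributes $L_e(1-2)=-L_e$.
\end{itemize}
I also need the converse: every $e\in\partial K$ has $d_e=1$, which forces $y_e=1$ through $z_{t,e}\le y_e$ from (C1). So the edges with contribution $-L_e$ are exactly those in $\partial K$, and summing gives $W(B')=-\sum_{e\in\partial K}L_e$, which is the claim.

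I do not expect a genuine obstacle. The one point needing care is checking that no configuration escapes this case split, for instance a selected edge with $y_e=1$ but no selected incidences. That case is excluded precisely by the lower bound $y_e\le d_e$ in (C2). Constraints (C3)--(C5) play no role in the identity itself; they matter only later, when one shows that $\partial K$ is a Hamiltonian cycle. I would remark on this after the proof, so that it is clear the lemma holds under (C1)--(C2) alone.
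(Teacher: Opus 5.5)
Your proof is correct and uses the same per-edge cancellation argument as the paper: internal edges net $0$, boundary edges net $-L_e$. You are more careful than the paper in using (C2) to rule out $y_e=0$ with $d_e>0$ and $y_e=1$ with $d_e=0$, and your remark that only (C1)--(C2) are needed is accurate.
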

\begin{proof}
If $e$ is incident to two selected triangles, it contributes $2L_e$ profit and incurs $2L_e$ cost, yielding net $0$.
If $e$ is incident to exactly one selected triangle, it contributes $L_e$ profit and incurs $2L_e$ cost, yielding net $-L_e$.
Summing over all edges gives the claim.
\end{proof}

\begin{lemma}[Soundness]
Any admissible solution defines an abstract triangulated disk whose boundary is a single simple Hamiltonian cycle.
\end{lemma}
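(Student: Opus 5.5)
The plan is to show that $K$ is built by gluing triangles one at a time along a tree, and then induct on that tree.

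\emph{Step 1: local counts.} By (C1) and (C2), a selected edge $e$ ($y_e=1$) lies in exactly one or two selected triangles, and every selected triangle uses all three of its edges. Let $I$ be the number of selected edges lying in two selected triangles and $b=|\partial K|$. Counting incidences with (C3) gives
\[
3(n-2)=2I+b,\qquad I+b=2n-3,
\]
hence $I=n-3$ and $b=n$.

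\emph{Step 2: the dual graph is a tree.} I would check that $B'$ has $(n-2)+(2n-3)=3n-5$ nodes and $3(n-2)=3n-6$ arcs, which is consistent with (C4). In $B'$, a boundary edge node has degree $1$ and an interior edge node has degree $2$. Deleting the leaf edge nodes and suppressing the degree-2 edge nodes therefore turns the tree $B'$ into the \emph{dual graph} $D$. The nodes of $D$ are the triangles of $K$, and two triangles are adjacent in $D$ when they share an interior edge. Since suppressing degree-2 nodes and deleting leaves preserves being a tree, $D$ is a tree on $n-2$ nodes with $n-3$ edges, matching $I=n-3$.

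\emph{Step 3: induction on leaves of $D$.} The claim to prove is that a set of $k$ triangles whose dual graph is a tree and whose vertex links are all simple paths is an abstract triangulated disk. Such a disk has $k+2$ vertices, and its boundary is a simple cycle through all of them. The base case $k=1$ is a single triangle.

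For the inductive step, remove a leaf triangle $t=\{u,v,w\}$ of $D$, where $uv$ is the unique edge $t$ shares with the rest $K'$. The other two edges $uw$ and $vw$ are boundary edges.

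\emph{The obstacle is the third vertex $w$.} Suppose $w$ also occurred in a triangle of $K'$. The link of $w$ in $K$ would then be the link in $K'$ together with the edge $uv$. That edge meets the rest of the link only if $wu$ or $wv$ is shared with another triangle, and this is impossible because $wu$ and $wv$ are boundary edges of $t$. So the link of $w$ would be disconnected, which is the bowtie case that (C5) excludes via $\chi(H_w)=1$.

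I would phrase this as follows: $H_w$ restricted to $t$ and its two edges at $w$ is a separate component, so the graph $H_w$ has at least two components. Since $H_w$ is a forest (it is a subgraph of the tree $B'$), this gives $\chi(H_w)\ge 2$, a contradiction.

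One must also check that removing $t$ preserves the path property for the links of $u$ and $v$. The edge $uv$ was interior and becomes a boundary edge, so each of these two links loses an end segment and remains a path; it is nonempty because $uv$ still lies in a triangle of $K'$. Hence $w$ is a new vertex. Gluing $t$ to the disk $K'$ along the boundary edge $uv$ gives a disk whose boundary cycle replaces $uv$ by the path $u\,w\,v$. The result is again a simple cycle, now through one more vertex.

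\emph{Step 4: conclusion.} Applying Step 3 with $k=n-2$, $K$ is a disk on $n$ vertices with a simple boundary cycle through all of them. Its Euler characteristic is $n-(2n-3)+(n-2)=1$, as a consistency check. By (C5), each group $U(v)$ is nonempty: $\chi(H_v)=1$ forces $H_v\neq\emptyset$. Hence all $n$ cities of $V$ appear, and the boundary cycle $\partial K$ (with $|\partial K|=n$ by Step 1) is a Hamiltonian cycle on $V$.
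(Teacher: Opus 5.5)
Your proof is correct, but it takes a different route from the paper.

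\textbf{The paper's route.} The paper's argument is topological. It uses (C4) only to get a connected complex, and (C2) and (C5) to make that complex a surface with boundary. (C3) then gives $\chi=n-(2n-3)+(n-2)=1$. Classification of compact surfaces turns this into a disk with one boundary circle. Finally, path-shaped vertex links put every vertex on that circle.

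\textbf{Your route.} You use the full strength of (C4). Contracting $B'$ shows that the dual graph $D$ is a \emph{tree}, not merely connected. You then run an ear-removal induction on leaves of $D$. Connectivity of $H_w$ forces each ear's apex to be a new vertex. This builds the disk explicitly by gluing triangles along boundary edges, tracking the boundary cycle as you go.

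\textbf{Comparison.} Your approach is elementary and constructive, with no appeal to surface classification. It also yields a structural picture the paper does not state: every admissible $K$ is a triangulation of the polygon bounded by its tour, with the $n-3$ interior edges as diagonals. The fan in the completeness proof is one instance.

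A side effect is that your argument never genuinely needs (C3). The induction gives $|V(K)|=k+2$ and $2k+1$ edges, and (C5) forces all $n$ cities to appear. So (C3) is implied by (C1), (C2), (C4) and (C5). The paper's route is shorter and more conceptual, but it leans on a heavier theorem and uses (C3) where you use the tree property.

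\textbf{One point to make explicit.} When you enter the induction, you need that each $H_v$ is a path. This follows because $H_v$ is a forest with $\chi(H_v)=1$, hence a tree. In $H_v$, triangle nodes have degree exactly $2$ and edge nodes have degree at most $2$, so the tree is a path. The paper asserts this fact too, so it is a small omission rather than a gap.
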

\begin{proof}
By (C4), the selected triangles form a connected complex.
Constraints (C2) and (C5) exclude non-manifold edges and disconnected vertex links, so the complex is a simplicial surface with boundary.
By (C3), its Euler characteristic is $\chi=1$, hence it is a disk.
A disk has exactly one boundary component, so the boundary is a single simple cycle.
Moreover, (C5) implies every vertex lies on the boundary, so this cycle is Hamiltonian.
\end{proof}

\begin{theorem}[Equivalence]
$\mathrm{OPT}_{\mathrm{TSP}} = -\mathrm{OPT}_{\mathrm{cGSTP}}$.
\end{theorem}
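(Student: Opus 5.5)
The plan is to prove the two inequalities $\mathrm{OPT}_{\mathrm{TSP}} \le -\mathrm{OPT}_{\mathrm{cGSTP}}$ and $\mathrm{OPT}_{\mathrm{TSP}} \ge -\mathrm{OPT}_{\mathrm{cGSTP}}$ separately, using the Combinatorial boundary identity to convert net weights into boundary lengths.

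\emph{First inequality.} Take an optimal admissible $B'$. The Soundness lemma says the selected triangles $K$ form an abstract triangulated disk whose boundary $\partial K$ is a simple Hamiltonian cycle. By the Combinatorial boundary identity, $-W(B')=\sum_{e\in\partial K}L_e$, which is therefore the length of some tour. Hence $\mathrm{OPT}_{\mathrm{TSP}}\le -\mathrm{OPT}_{\mathrm{cGSTP}}$.

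\emph{Second inequality.} For this direction I would build an admissible solution from an optimal tour $v_1v_2\cdots v_nv_1$. Take the fan triangulation $K=\{\{v_1,v_i,v_{i+1}\}: 2\le i\le n-1\}$. Select every triangle of $K$, every edge of these triangles, and every incidence between them. Then I would check the constraints one by one.
\begin{itemize}
\item (C1) holds by construction.
\item (C3): there are $n-2$ triangles, and the edges are the $n$ tour edges together with the $n-3$ diagonals $v_1v_i$ for $3\le i\le n-1$, giving $2n-3$ edges.
\item (C2): each tour edge lies in exactly one triangle and each diagonal in exactly two.
\item (C5): for $v_1$, the star consists of all $n-2$ triangles and the $n-1$ edges at $v_1$, with the triangle--edge incidences forming a path. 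So $\chi(H_{v_1})=(2n-3)-(2n-4)=1$. For $v_i$ with $i\ne 1$, the star is one or two triangles forming a similar path, so again $\chi=1$.
\end{itemize}
Then $\partial K$ is exactly the tour, and the boundary identity gives $W(B')=-\mathrm{OPT}_{\mathrm{TSP}}$. Hence $\mathrm{OPT}_{\mathrm{cGSTP}}\ge -\mathrm{OPT}_{\mathrm{TSP}}$.

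\emph{Main obstacle: constraint (C4).} I expect (C4) to be where the argument breaks, because of a counting problem. The active subgraph $B'$ has $(n-2)+(2n-3)=3n-5$ nodes. By (C1) it has $3(n-2)=3n-6$ arcs. So $|V|-|E|=1$, which means $B'$ is a tree exactly when it is connected. Connectivity in the fan holds because consecutive triangles share the diagonal node $v_1v_{i+1}$.

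The difficulty is in the local check. For the stars in (C5) I counted the nodes and arcs of the induced subgraph of $B'$ restricted to the star. However, an edge node not containing $v$ also appears in $H_v$ only via incidence, so the precise definition of $H_v$ matters. I would first try to fix the reading that includes only edges containing $v$, together with the incidences between those edges and the triangles at $v$. Under that reading, the count for $v_1$ needs rechecking: the triangles are $n-2$ nodes, the edges at $v_1$ are $n-1$ nodes, and each triangle meets two of them, giving $2(n-2)$ arcs. This again yields $\chi=1$, so the fan satisfies (C5) and the argument would go through.

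If the paper's reading of $H_v$ differs, I would adjust the count to that definition. In every case, the verification for the fan reduces to the fact that each local star is a path of triangles.
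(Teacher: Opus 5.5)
Your proposal is correct and follows the paper's proof: soundness comes from the Soundness lemma combined with the boundary identity, and completeness comes from the fan triangulation $\{v_1,v_i,v_{i+1}\}$, for which you carry out the (C2)--(C5) checks that the paper only asserts. Your two stated worries do not materialize. (C4) follows from your count of $3n-5$ nodes and $3n-6$ arcs together with connectivity through the diagonals. (C5) holds under either reading of $H_v$: a fan of $k$ triangles gives $k+(k+1)-2k=1$ for the open star and $k+(2k+1)-3k=1$ for the closed star, so no adjustment is needed.
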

\begin{proof}
\textbf{Soundness} follows from Lemma~2 and Lemma~1.

\textbf{Completeness.}
Let $C=(v_1,v_2,\dots,v_n)$ be any Hamiltonian cycle.
Select the $n-2$ triangles
\[
K:=\bigl\{\{v_1,v_i,v_{i+1}\}: i=2,\dots,n-1\bigr\}.
\]
Then $\partial K=C$.
Setting $x_t=1$ for $t\in K$, $y_e=1$ for all edges used by $K$, and $z_{t,e}=1$ for all incidences $e\subset t$ yields a feasible solution satisfying (C1)--(C5).
By Lemma~1, its objective value equals $-L(C)$.
Taking the optimum over $C$ gives $\mathrm{OPT}_{\mathrm{cGSTP}}\ge -\mathrm{OPT}_{\mathrm{TSP}}$, and combining with soundness yields equality.
\end{proof}

\section{Discussion}
In this note, we established a structural equivalence between the symmetric TSP and a constrained variant of the Group Steiner Tree Problem by reformulating tours as boundaries of admissible triangle selections.
The resulting model provides a unified constraint system in which global connectivity and objective cancellation arise naturally from the underlying simplicial incidence structure.
A key feature is that the construction is input-decoupled: it is exact when the full complex is available, and becomes a controlled heuristic when restricted to a prescribed candidate triangle set.
From a practical modeling perspective, the tour objective emerges through local cancellation of internal edges, while feasibility is enforced by a compact combination of a global tree constraint and local Euler regularity \cite{edelsbrunner2010}.
This makes it possible to use sparse geometric complexes (e.g., Delaunay or related triangulations) as black-box restrictions of the search space \cite{das1989,dillencourt1990,letchford2008}, while preserving exactness whenever an optimal tour is contained in the chosen complex.
Algorithmic development and a systematic computational study are left for future work.

\end{document}